\documentclass[11pt]{article}

\newif\ifblind
\blindfalse

\newif\ifdraft
\draftfalse

\usepackage{rpmacros}
\usepackage[dvipsnames,svgnames,x11names,hyperref]{xcolor}
\usepackage{stmaryrd}
\usepackage{amsthm}
\usepackage{xfrac}
\usepackage{mathpazo}
\usepackage{gitinfo}
\usepackage{enumitem}
\usepackage{fullpage}
\usepackage{thmtools}
\usepackage[T1]{fontenc} 
\usepackage[colorlinks]{hyperref}
\hypersetup{
  linkcolor=[rgb]{0,0,0.4},
  citecolor=[rgb]{0, 0.4, 0},
  urlcolor=[rgb]{0.6, 0, 0}
}
\usepackage{lipsum}
\usepackage{calrsfs,bbm}
\usepackage{multirow}
\usepackage{setspace}

\usepackage{microtype}

\usepackage[linesnumbered,vlined,ruled]{algorithm2e}
\usepackage{mathtools}

\usepackage{amsthm}
\usepackage{thmtools,thm-restate}
\usepackage[nameinlink,capitalise]{cleveref}

\numberwithin{equation}{section}
\declaretheoremstyle[bodyfont=\it,qed=\qedsymbol]{noproofstyle}

\declaretheorem[numberlike=equation]{observation}

\declaretheorem[name=Observation,numbered=no]{observation*}

\declaretheorem[numberlike=equation]{theorem}

\declaretheorem[name=Theorem,numbered=no]{theorem*}

\declaretheorem[numberlike=equation]{lemma}
\declaretheorem[name=Lemma,numbered=no]{lemma*}

\declaretheorem[numberlike=equation]{corollary}
\declaretheorem[name=Corollary,numbered=no]{corollary*}

\declaretheorem[name=Proposition,numbered=no]{proposition*}

\declaretheorem[name=Claim,numbered=no]{claim*}

\declaretheorem[name=Conjecture,numbered=no]{conjecture*}

\declaretheorem[name=Question,numbered=no]{question*}

\declaretheoremstyle[bodyfont=\it,qed=$\lozenge$]{defstyle} 

\declaretheorem[numberlike=equation,style=defstyle]{definition}
\declaretheorem[unnumbered,name=Definition,style=defstyle]{definition*}

\declaretheorem[unnumbered,name=Example,style=defstyle]{example*}

\declaretheorem[unnumbered,name=Notation=defstyle]{notation*}

\declaretheorem[unnumbered,name=Construction,style=defstyle]{construction*}

\declaretheorem[numberlike=equation,style=defstyle]{remark}
\declaretheorem[unnumbered,name=Remark,style=defstyle]{remark*}

\renewcommand{\phi}{\varphi}
\renewcommand{\epsilon}{\varepsilon}

\newcommand{\SP}{\Sigma\Pi}
\newcommand{\SPsize}[1]{(\SP)^k\operatorname{-size}}

\usepackage{nth}
\usepackage{intcalc}
\usepackage{etoolbox}
\usepackage{xstring}

\usepackage{ifpdf}
\ifpdf
\else
\usepackage[quadpoints=false]{hypdvips}
\fi

\newcommand{\ECCCurl}[2]{http://eccc.hpi-web.de/report/\ifnumcomp{#1}{>}{93}{19}{20}#1/#2/}

\newcommand{\shortECCC}[2]{\texttt{\href{http://eccc.hpi-web.de/report/\ifnumcomp{#1}{>}{93}{19}{20}#1/#2/}{eccc:TR#1-#2}}}

\newcommand{\parseECCC}[1]{
\StrSubstitute{#1}{TR}{}[\tmpstring]%
\IfSubStr{\tmpstring}{/}{ 
\StrBefore{\tmpstring}{/}[\ecccyear]%
\StrBehind{\tmpstring}{/}[\ecccreport]%
}{
\StrBefore{\tmpstring}{-}[\ecccyear]%
\StrBehind{\tmpstring}{-}[\ecccreport]%
}%
\shortECCC{\ecccyear}{\ecccreport}}

\newcommand*\samethanks[1][\value{footnote}]{\footnotemark[#1]}

\newcounter{todo}

\ifdraft
\newcommand{\RPnote}[1]{\refstepcounter{todo}\textcolor{WildStrawberry}{\guillemotleft RP: #1 \guillemotright\addcontentsline{tod}{subsection}{[RP]~#1}}}
\newcommand{\SBnote}[1]{\refstepcounter{todo}\textcolor{BlueGreen}{\guillemotleft Shubham: #1 \guillemotright\addcontentsline{tod}{subsection}{[SB]~#1}}}
\newcommand{\gitinfonotecolour}{Gray}
\newcommand{\easteregg}{}
\else
\newcommand{\RPnote}[1]{}
\newcommand{\SBnote}[1]{}
\newcommand{\gitinfonotecolour}{white}
\newcommand{\easteregg}{"The product of mathematics is clarity and understanding. Not theorems by themselves." -- Bill Thurston}
\fi

\newcommand{\ignore}[1]{}
\newcommand{\gitinfonote}{git info:~\gitAbbrevHash\;,\;(\gitAuthorIsoDate)\; \;\gitVtag}

\renewcommand{\vec}[1]{\ensuremath{\bm{#1}}}
\newcommand{\K}{\ensuremath{\mathbb{K}}}

\makeatletter

\newcommand\listtodoname{List of todos}
\newcommand\listoftodos{%
  \section*{\listtodoname}\@starttoc{tod}}
\makeatother

\allowdisplaybreaks
\title{Hitting Sets for Polynomials with\\Small Partial Derivative Spaces} 

\ifblind
\else
\author{
     {Shubham Bhardwaj{\thanks{Tata Institute of Fundamental Research, Mumbai, India. Email: \texttt{\{shubham.bhardwaj,ramprasad\}@tifr.res.in}.  Research supported by the Department of Atomic Energy, Government of India, under project number RTI400112, and Google and SERB Research Grants. }}}
     \and
     {Ramprasad Saptharishi\samethanks[1]}
}
\fi
\date{28 September 2026}

\begin{document}

\maketitle

\begin{abstract}
We give an explicit hitting set of size $\poly(n,d,r)$ for the class of $n$-variate degree-$d$ polynomials whose partial derivative space is bounded by $r$, over any field $\F$ of characteristic zero. In particular, this yields a polynomial sized hitting set for the class of depth-$3$ powering circuits. 

The main technical insight is the construction of a ``formal derivation'' and properties of the associated Wronskian with respect to this derivation, which was previously studied by Moura \cite{Moura_2004} in a very different context. The proofs in this paper are elementary and completely self-contained. 

\medskip

\textbf{AI disclosure:} The proof of this result was obtained during conversations \cite{astra_proof} with OpenAI GPT-6 Astra. The proof presented in this writeup is a rewriting (in the authors' words) of the proof obtained by the AI model in a form that we believe is understandable to researchers.
\end{abstract}

\ifdraft
{\footnotesize 
\listoftodos
}
\fi

\section{Introduction}

Algebraic circuits are the most natural computational model for studying the complexity of multivariate polynomials via the number of basic operations (additions, multiplications) required to compute it. The field of algebraic complexity broadly deals with the classification of polynomials based on their computational complexity, and also on algorithmic questions regarding polynomials provided via such computations. 

Two of the most important aspects of this study are the question of lower bounds (i.e., finding explicit polynomials that cannot be computed by small algebraic circuits), and the algorithmic task of polynomial identity testing (i.e., given a circuit $C$, algorithmically check if $C$ is computing the zero polynomial). In the context of polynomial identity tests, there are two types of algorithms --- whitebox PITs (where the algorithm is given the entire circuit description), or blackbox PITs (where we are only provided evaluation access to the circuit). Although the questions of proving lower bounds and constructing deterministic PITs appear to be of very different flavours, they are intimately connected to each other. Several classical results \cite{HS80,A05a,KI04,ChouKS19,GKSS19} show that a strong enough lower bound for circuits can be used to obtain efficient deterministic polynomial identity tests, and vice-versa. Although similar statements when specialised to restricted circuit classes are known in some contexts \cite{DSY09,Oliveira16,ChouKS19,BKRRSS26}, this has broadly been a guiding principle for the study of lower bound and PITs for subclasses of circuits. Historically, a polynomial identity test for a class $\mathcal{C}$ has been a successor to a lower bound for the class $\mathcal{C}$, and is often heavily inspired by the techniques used in the lower bound proof. Paradoxically, even `natural' ``barriers'' for proving algebraic circuit lower bounds stem from the conjectured existence of \emph{succinct polynomial identity tests} \cite{GKSS17,FSV18}. 

Over the recent years, there has been quite a lot of progress made towards proving lower bounds for natural subclasses of algebraic circuits, and also constructing whitebox and blackbox PITs for them \cite{AGKS15,GKST15,GG20,LST21,AF22,GOSSS26}. One subclass of circuits that has been just shy of a full resolution has been the class of \emph{depth-$3$ powering circuits}. 

\paragraph{Depth-$3$ powering circuits:} A depth-$3$ powering circuit computes a polynomial of the form $f = c_1 \ell_1^d + \cdots + c_s \ell_s^d$ where each $\ell_i$ is a linear polynomial over the underlying variables. This is the circuit analogue of the classical notion of Waring rank, where the Waring rank of a polynomial $f$ is the smallest $s$ such that $f$ can be written as sum of $s$ powers of linear forms. Saxena~\cite{S08b} gave the first lower bound for this class of circuits, and also a polynomial time white-box PIT. A different white-box PIT was also given by Kayal~\cite{K10}. Agrawal, Saha and Saxena~\cite{ASS13}, and Forbes and Shpilka~\cite{FS13} gave the first $n^{O(\log n)}$-size hitting set for this class (among constructions of quasi-polynomial sized hitting sets for other more general classes), and Forbes, Saptharishi and Shpilka~\cite{FSS14} improved this to $n^{O(\log\log n)}$.

The key weakness of the above model that enabled all the lower bounds and PITs was the observation that any $k$-th order partial derivative of $f = c_1 \ell_1^d + \cdots + c_s \ell_s^d$ is a linear combination of $\ell_1^{d-k},\ldots, \ell_s^{d-k}$ and thus the space of all possible partial derivatives of $f$ (which we shall denote by $\partial^{=*}(f)$) is only $O(sd)$-dimensional. The hitting sets constructed by \cite{ASS13,FSS14} also extend to the (potentially) more general class of polynomials $f$ whose space of all possible partial derivatives is low-dimensional. It continues to remain open if such polynomials with low-dimensional space of partial derivatives admit a not-too-large representation as a depth-$3$ powering circuit. \\

The main result of this paper is an explicit polynomial-sized hitting set for this class of polynomials over a characteristic zero field.  

\begin{theorem}
\label{thm:main-thm}
Let $\F$ be a field of characteristic zero. For any $n,r \geq 0$, there is an explicit polynomial map $(G_1(t),\ldots, G_n(t)) \in \F[t]^n$ with $\deg G_i \leq O(nr^2)$ such that if $f(x_1,\ldots, x_n)$ is a polynomial such that $\dim \partial^* f \leq r$, then 
\[
f(x_1,\ldots, x_n) = 0 \Leftrightarrow f(G_1(t),\ldots, G_n(t)) = 0.
\]
\end{theorem}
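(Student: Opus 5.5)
We plan to use a Wronskian with respect to a formal derivation built from the generator. Fix the substitution $x_i \mapsto G_i(t)$ and consider the derivation $D = \sum_{i=1}^n G_i'(t)\,\partial_{x_i}$. This is the derivation that satisfies $\frac{d}{dt} f(G(t)) = (Df)(G(t))$. Since $D$ is a first-order operator whose coefficients are scalar polynomials in $t$, every iterate $D^k f$ lies in $\F[t]\otimes \partial^{*} f$. Write $V = \partial^* f$, with $\dim V \le r$. Then the sequence $f, Df, D^2 f,\ldots$ lives in an $\F(t)$-space of dimension at most $r$. Our aim is to show that if $f(G(t)) \equiv 0$, then all derivatives $\frac{d^k}{dt^k} f(G(t))$ vanish. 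This means $(D^k f)(G(t)) = 0$ for all $k$. We then want to leverage the small dimension to force $f = 0$.

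The concrete choice of generator is a Kronecker/shift-style map $G_i(t) = \alpha_i + \beta_i t + \cdots$, of degree $O(nr^2)$. The parameters are chosen so that two things hold. First, the evaluation point $a = G(0)$ is "generic" for $V$: the map $V \to \F$, $g \mapsto g(a)$, and its higher-order analogues detect nonzero elements. Second, the directions $G'(0), G''(0), \ldots$ span enough of $\F^n$.

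The key steps, in order, are as follows.

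\begin{enumerate}[label=(\arabic*)]
\item Choose a basis $g_1,\ldots,g_m$ of $V$ ($m\le r$). Form the Wronskian-type matrix $W = \big((D^{j} g_i)(G(t))\big)_{i,j<m}$, or its variant with respect to $D$ (the ``formal derivation'' Wronskian of Moura). Show that linear independence of the $g_i$ over $\F$ is reflected in nonvanishing of a suitable Wronskian. Because $V$ is closed under all $\partial_{x_i}$, one can hope to show that $D$ acts on $V\otimes\F(t)$ as a connection: $D g_i = \sum_j c_{ij}(t) g_j$, with $c_{ij}$ rational in $t$ of degree $O(r)$ each.
\item Use $f(G(t))=0$ and closure under $D$ to derive a linear ODE. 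The vector $u(t) = (g_i(G(t)))_i$ satisfies $u' = C(t)u$. Since $f = \sum \lambda_i g_i$, the function $f(G(t))$ is a fixed linear functional of $u$, so its vanishing imposes a linear ODE constraint.
\item Argue that, for a good choice of $G$ (making $D$ ``nondegenerate'' on $V$, e.g.\ the Wronskian $\det W \ne 0$ as a polynomial in $t$), the identity $f(G(t))\equiv 0$ together with its $t$-derivatives up to order $r$ forces all $\partial^{\alpha} f$ of order up to $d$ to vanish at $G(t)$. Conclude by induction on degree, using that the homogeneous top part is detected.
\end{enumerate}

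The degree bound $O(nr^2)$ will come from needing about $r$ rounds of derivations, with each introducing degree $O(nr)$ in the coefficients.

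The main obstacle is step (3): showing that an \emph{explicit} $G$ of degree $O(nr^2)$ makes the formal-derivation Wronskian of every $r$-dimensional derivative-closed space nonzero, uniformly over all $f$. I would first try $G_i(t) = \sum_{k=1}^{r} c_{i,k} t^{k}$, with $c_{i,k} = i^{k}$-type Vandermonde coefficients. The goal is to prove a Moura-style statement: the order of vanishing at $t=0$ of $\det W$ is bounded by $O(r^2)$ times a quantity controlled by the Hilbert function of $V$ at the point $G(0)$. From this we get nonvanishing via a degree count. If that fails, we would fall back to shifting by a generic-looking point and bounding the rank of the relevant Vandermonde systems.
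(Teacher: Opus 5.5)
\textbf{There is a genuine gap.} Your derivation $D=\sum_i G_i'(t)\,\partial_{x_i}$ has no $\partial_t$ term, and this breaks the argument twice.

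First, $\frac{d^k}{dt^k}f(G(t))=(D^kf)(G(t))$ fails for $k\ge 2$, because $Df$ already depends on $t$ through the $G_i'$.

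Second, and fatally, this $D$ is $\F(t)$-linear and lowers $x$-degree. So $D^{d+1}$ kills $V=\partial^*f$ (with $d=\deg f$), and the Wronskian has a zero row whenever $\dim V>d+1$. For example, for $f=x_1x_2$ we have $V=\mathrm{span}(1,x_1,x_2,x_1x_2)$, and the row $D^3(\cdot)$ is identically zero.

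Adding $\partial_t$ repairs the chain rule, but not the real problem. For any polynomial (or rational) curve $G$, the field of constants of $\partial_t+\sum_i G_i'\partial_{x_i}$ contains every $x_i-G_i(t)$. In the coordinates $y=x-G(t)$ this derivation is just $\partial_t$, with constants $\F(y)$. So the Wronskian is nonzero iff the $g_j(y+G(t))$ are $\F(y)$-linearly independent. Since the class is closed under shifts, that is equivalent to the very hitting property you want to prove. This is exactly your step (3), which you leave open, so as set up the proposal is circular.

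\textbf{The missing idea} is to decouple the derivation from the polynomial generator. The paper uses $\tilde D=\partial_t+\sum_i\frac{i}{1-it}\partial_{x_i}$, cleared to $D=Q(t)\tilde D$ with $Q=\prod_i(1-it)$. This corresponds to the transcendental curve $P_i=-\ln(1-it)$, and it has trivial field of constants, $\ker D=\F$:
\begin{itemize}
\item the top $x$-homogeneous part $h_d$ of a constant must be $t$-free;
\item then $\partial_t h_{d-1}=-\sum_i\frac{i}{1-it}\partial_{x_i}h_d$ would have a simple pole at $t=1/i$ unless $\partial_{x_i}h_d=0$, and a $t$-derivative of a rational function never has a simple pole;
\item one then extends from the ring to the fraction field.
\end{itemize}

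Because $D$ commutes with each $\partial_{x_i}$, the $D$-Wronskian of a basis of $\partial^*f$ is $x$-free. It is therefore a nonzero $w(t)\in\F[t]$ of degree at most $nr^2$, and $w=W_D\circ P$. If $f\circ P\equiv 0\bmod t^B$, then $D^kf\circ P\equiv 0\bmod t^{B-k}$, so $t^{B-r+1}$ divides $w$. This is impossible for $B=nr^2+r$. Only at the end is $P$ truncated to degree $B$ to obtain $G$, using $f\circ G\equiv f\circ P\bmod t^B$.

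\textbf{Your concrete candidate fails.} The curve $G_i=\sum_{k\le r}i^kt^k$ lies in an $r$-dimensional subspace of $\F^n$. So for $n>r\ge 2$, some nonzero linear form $\ell$ (with $\dim\partial^*\ell=2$) vanishes on it. Any valid generator needs $1,G_1,\dots,G_n$ to be linearly independent, which forces degree at least $n$.
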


In particular, the above yields a $\poly(n,d,s)$-sized hitting set for the class of size $s$, degree $d$ depth-$3$ powering circuits. In fact, $G_i(t)$ can just be instantiated to be the \emph{truncated Taylor series} of $-\ln(1 - it)$:
\[
G_i(t) = it + \frac{(it)^2}{2} + \frac{(it)^3}{3} + \cdots + \frac{(it)^B}{B} \quad\text{for $i = 1,\ldots, n$}
\]
where $B = nr^2 + r$. \\

\paragraph{Using power series for constructing hitting-set generators: }
The general structure of constructing a hitting-set generator via power series is reminiscent of the randomised PIT algorithms of Chen and Kow \cite{CK97}, or Lewin and Vadhan~\cite{LV98}. We describe the setting of \cite{CK97} (the setting of \cite{LV98} is very similar with irreducible polynomial playing the role of primes). They show that if $f(x_1,\ldots, x_n)$ is a nonzero polynomial, then 
\[
f(\pm\sqrt{p_{1,1}} \pm \cdots \pm\sqrt{p_{1,\ell}}, \cdots, \pm\sqrt{p_{n,1}} \pm \cdots \pm\sqrt{p_{n,\ell}} ) \neq 0
\]
where $p_{i,j}$'s are distinct primes with $\ell = O(\log \deg(f))$. However, an efficient cannot evaluate $f$ at these irrationals of infinite precision. \cite{CK97} show that if the signs are randomly chosen, we can achieve an error of $\epsilon$ as long as we truncate the above irrationals to precision $B(\epsilon)$. This results in randomised algorithm for PIT where reducing the error is achieved by increasing the precision (and thereby the running time) but keeping the number of random bits fixed. 

The PIT in this paper is similar in the sense that the generator is a set of algebraically independent power series $\inparen{ P_i(t) := - \log(1 - it) \;:\; i\in [n]}$. For the class of polynomials with low-dimensional partial derivative space, we show that 
\[
0 \neq f \implies f(P_1(t), \ldots, P_n(t)) \neq 0 \bmod t^B
\]
for a not-too-large $B$. 

Mrinal Kumar pointed out to us that $f(e^{zt}, e^{z^2 t}, \ldots, e^{z^n t}) \neq 0 \bmod t^s$ for any nonzero $s$-sparse polynomial (thus giving a bivariate, low degree, hitting set generator for sparse polynomials). It would be interesting to see if this applies to a more general class of polynomials. 

\subsection{Proof overview}

Suppose $f(x_1,\ldots, x_n)$ is a nonzero polynomial with low dimensional partial derivative space, and say $g_1,\ldots, g_r$ is a basis of $\partial^{=*}(f)$ with $g_r = f$. Let us consider $f(P_1(t), \ldots, P_n(t))$ for some univariates $P_1,\ldots, P_n$, and we wish to know if $f(P_1(t),\ldots, P_n(t))$ can be divisible by $t^B$ for a large $B$. Then, 
\[
\frac{d}{dt} (f \circ P) = \sum_{i=1}^n \frac{d P_i}{dt} \cdot (\partial_{x_i} f)(P_1,\ldots, P_n). 
\]
Inspired by the RHS above, let us define a ``derivative operator'' $D_P$ given by 
\begin{align*}
D_P & : \; \F[t][x_1,\ldots, x_n] \rightarrow \F[t][x_1,\ldots, x_n]\\
D_P(g(t,x_1,\ldots, x_n)) &= \partial_t g + \inparen{\frac{d P_1}{dt}} \cdot \partial_{x_1}g  + \cdots + \inparen{\frac{d P_n}{dt}} \cdot \partial_{x_n}g.
\end{align*}
Thus, if $D := D_P$, we have
\[
\frac{d^i}{dt^i}(f \circ P) = D^i(f) \circ P.
\]
Thus, if $f \circ P$ was divisible by $t^B$, then $D^i f \circ P$ is divisible by $t^{B-i}$. \\

For any derivation $D$, there is a well-studied notion of a $D$-Wronskian of a set of polynomials $g_1,\ldots, g_r$ given by 
\[
W_D(g_1,\ldots, g_r) = \det \begin{bmatrix}
g_1 & g_2 & \cdots & g_r\\
D(g_1) & D(g_2) & \cdots & D(g_r)\\
\vdots & \vdots & \ddots & \vdots \\
D^{r-1}(g_1) & D^{r-1}(g_2) & \cdots & D^{r-1}(g_r)
\end{bmatrix}.
\]
The above is a polynomial in $t, x_1,\ldots, x_n$ of not-too-large degree. The first observation is that, since the set $\set{g_1,\ldots, g_r}$ is $\partial_x$-closed (i.e., any $\partial_{x_i} g_j$ can be written as a linear combination of other elements in this set), one can argue that the above $D$-Wronskian satisfies the property that $\partial_{x_i} W_D(g_1,\ldots, g_r) = 0$ for all $i \in [n]$. In other words, the above is just a pure polynomial in $t$ and does not depend on $x_1,\ldots, x_n$ at all! Thus, we can substitute $x_i \mapsto P_i$ in the above matrix and that does not affect $W_D(g_1,\ldots, g_r)$. 

Focusing on the last column of the matrix $W_D(g_1,\ldots, g_r) \circ P$, since we assumed that $f \circ P = 0 \bmod t^B$, we observe that every entry in the last column is divisible by $t^{B-r + 1}$. Thus, in particular $W_D(g_1,\ldots, g_r)$ must be divisible by $t^{B-r+1}$. However, we know that $W_D(g_1,\ldots, g_r)$ is a polynomial in $t$ of suitably bounded degree (and composing by $P$ does not increase the degree at all since the polynomial did not depend on $x$ at all). If $B$ is set to be sufficiently large, we have our contradiction. 

Unless of course, $W_D(g_1,\ldots, g_r)$ happens to be the zero polynomial, and we get nothing! Thus, we would like $D$ to have the property that $W_D(g_1,\ldots, g_r) \neq 0$. Standard Wronskians capture linear independence, and it seems reasonable to hope that $g_1,\ldots, g_r$ being linearly independent should allow us to argue that $W_D(g_1,\ldots, g_r) \neq 0$. For this, however, we need some properties satisfied by the derivative operator $D$, namely that its kernel is just the base field $\F$. More formally, if 
\[
\setdef{h \in \F(t)(x_1,\ldots, x_n)}{D(h) = 0} = \F,
\]
then $W_D(g_1,\ldots, g_r) = 0$ if and only if $g_1,\ldots, g_r$ are $\F$-linearly dependent. \\

Flipping this around, suppose we had a convenient ``derivative operator'' $D = \partial_t + \sum_{i=1}^n b_i(t) \partial_{x_i}$ that has the property that $\ker(D) = \F$, then we can reverse-engineer $P_1(t),\ldots, P_n(t)$ such that $\frac{d P_i}{dt} = b_i(t)$ and the above argument essentially goes through. In this write-up, we work with a specific derivative operator 
\[
\tilde{D} := \partial_t + \sum_{i=1}^n \frac{i}{1 - it} \cdot \partial_{x_i}
\]
which indeed has the property that $\ker(\tilde{D}) = \F$ (\cref{lem:D-trivial-kernel}). A derivative operator of a similar shape, and its associated Wronskian, appears to have been studied by Moura~\cite{Moura_2004} in a very different context. For this choice, we have $P_i(t) = - \ln(1 - it)$. Of course, these are infinite power series, and the $b_i(t)$'s are rational functions and not polynomials. But both of these are minor technicalities that are easily fixed. 

\begin{remark}
To be more precise, the properties we need from $\tilde{D}$ for our proof are that $\ker \tilde{D} = \F$ and that $\tilde{D}$ commutes with $\partial_{x_i}$ for all $i$. The latter property essentially forces $\tilde{D}$ to be of the form $a(t) \partial_t + \sum_i b_i(t) \partial_{x_i}$, and the above derivation is the simplest of this form that satisfies $\ker \tilde{D} = \F$. 
\end{remark}

\subsection*{Notation}

\begin{itemize}\itemsep0pt
    \item Throughout this paper we assume that we will be working with a field of characteristic zero. We use $\F\indsquare{t}$ to denote the ring of formal power series in $t$
    \item We use $\partial_{x_i}(f)$ as a shorthand for $\tfrac{\partial f}{\partial x_i}$. 
    \item We abuse notation and sometimes use $x$ to denote the tuple $(x_1,\ldots, x_n)$. Notations such as $\deg_x$ or $\Hom^{=d}_x$ will denote the degree or homogeneous component with respect to the variables $(x_1,\ldots, x_n)$. 
\end{itemize}

\section{Derivations and $D$-Wronskians}

\begin{definition}[Derivations]
    \label{defn:derivation}
    Let $R = \F(t)[x_1,\ldots, x_n]$ and let $\K = \operatorname{Frac}(R) = \F(t)(x_1,\ldots, x_n)$. A map $D: R \rightarrow R$ is said to be a \emph{derivation} if 
    \begin{itemize}\itemsep0pt
    \item $D$ is $\F$-linear, i.e. $D(\alpha f + \beta g) = \alpha D(f) + \beta D(g)$ for $\alpha, \beta \in \F$ and $f,g \in R$,
    \item $D$ satisfies the Leibniz product rule, i.e. $D(fg) = f D(g) + g D(f)$. 
    \end{itemize}

    A derivation $D$ over $R = \F(t)[x_1,\ldots, x_n]$ extends uniquely to a derivation over $\K = \operatorname{Frac}(R)$ by defining $D(B^{-1}) = - B^{-2} \cdot D(B)$ for any $0 \neq B \in R$, thus extending as 
    \[
    D(A/B) = \frac{B\cdot D(A) - A \cdot D(B)}{B^2}.
    \]

    For a derivation $D$, we define its \emph{field\footnote{Literature also uses the phrase `ring of constants' to refer to this. It is not hard to check that it is indeed a field.} of constants} to be the set $\ker D = \setdef{h \in \K}{D(h) = 0}$. We shall say that $D$ has a \emph{trivial field of constants} if $\ker D = \F$. 
\end{definition}

The notion of $D$-Wronskian, which is a generalisation of standard Wronskians in the context of a derivation, was studied by Moura~\cite{Moura_2004} in the context of hyperelliptic integrals. The following lemma shows that $D$-Wronskians also capture linear independence, provided $D$ has a trivial field of constants. 

\begin{lemma}[Wronskians with respect to derivatives with trivial field of constants]
    \label{lem:D-wronskian-linear-independence}
    Suppose $D$ is a derivation that has a trivial field of constants. Then, for any $g_1,\ldots, g_m \in R$, we have that $g_1,\ldots, g_m$ are $\F$-linearly independent if and only if the $D$-Wronskian (denoted by $W_D(g_1,\ldots, g_m)$ below) is nonzero:
    \[
    W_D(g_1,\ldots, g_m) = \det \begin{bmatrix}
    g_1 & g_2 & \cdots & g_m\\
    D (g_1) & D (g_2) & \cdots &D (g_m)\\
    \vdots & \vdots & \ddots & \vdots \\
    D^{m-1} (g_1) & D^{m-1} (g_2) & \cdots & D^{m-1} (g_m)
    \end{bmatrix}
    \]
\end{lemma}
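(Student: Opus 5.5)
The easy direction is that dependence forces a vanishing Wronskian. Suppose $\sum_j c_j g_j = 0$ with $c_j \in \F$ not all zero. Since $D$ is $\F$-linear, applying $D^k$ gives $\sum_j c_j D^k(g_j) = 0$ for every $k$. So the vector $(c_1,\ldots,c_m)^T$ lies in the kernel of the Wronskian matrix, and $W_D(g_1,\ldots,g_m) = 0$. This direction does not use the hypothesis on constants.

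For the converse I will induct on $m$, working throughout over the field $\K$, to which $D$ extends by \cref{defn:derivation}; I will prove the statement for arbitrary $g_i \in \K$. The base case $m=1$ is immediate: $W_D(g_1) = g_1$.

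For the inductive step, assume $W_D(g_1,\ldots,g_m) = 0$ and that the $g_i$ are not all zero; if they were, dependence is trivial. Reorder so that $g_1 \neq 0$. The key tool is the scaling identity
\[
W_D(h g_1,\ldots,h g_m) = h^m\, W_D(g_1,\ldots,g_m) \qquad (h \in \K).
\]
It holds because the Leibniz rule gives $D^k(hg) = \sum_{i\le k}\binom{k}{i} D^{k-i}(h)\, D^i(g)$. Hence the row vector $(D^k(hg_j))_j$ equals $h\,(D^k g_j)_j$ plus a $\K$-combination of the earlier rows (scaled appropriately), so row operations leave only the diagonal factor $h$ in each of the $m$ rows.

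Take $h = g_1^{-1}$. Then $W_D(1, u_2, \ldots, u_m) = 0$, where $u_j = g_j/g_1$. The first column of this matrix is $(1,0,\ldots,0)^T$, so expanding along it gives
\[
W_D\big(D(u_2),\ldots,D(u_m)\big) = 0.
\]
By induction, applied to elements of $\K$, there are $c_2,\ldots,c_m \in \F$, not all zero, with $D\big(\sum_j c_j u_j\big) = 0$. The trivial field of constants hypothesis then gives $\sum_{j\ge2} c_j u_j = c \in \F$. Multiplying by $g_1$ yields the $\F$-linear dependence $-c\,g_1 + \sum_{j\ge2} c_j g_j = 0$, which is nontrivial because not all $c_j$ vanish.

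The main obstacle is making sure the induction is legitimate over $\K$ rather than over $R$, since the quotients $u_j$ leave $R$. This is why I state the lemma for $g_i \in \K$ from the start; the kernel hypothesis is already phrased over $\K$, so nothing else changes. The other point needing care is the triangularity claim behind the scaling identity. It is exactly the observation that row $k$ of the Wronskian matrix of the $hg_j$ is $h$ times row $k$ of the original matrix plus a $\K$-combination of rows $0,\ldots,k-1$ of the original.
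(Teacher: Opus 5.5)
Your proof is correct, but it takes a different route from the paper.

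\textbf{Your route.} You use the classical reduction-of-order argument. The scaling identity $W_D(hg_1,\ldots,hg_m)=h^m\,W_D(g_1,\ldots,g_m)$, obtained from the general Leibniz rule and a lower-triangular change of rows, lets you normalise $g_1=1$. Expanding along the first column then passes to the Wronskian of $D(g_2/g_1),\ldots,D(g_m/g_1)$.

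\textbf{The paper's route.} The paper never rescales the $g_j$.
\begin{itemize}
\item It takes a nonzero kernel vector $(c_1,\ldots,c_m)\in\K^m$ of the Wronskian matrix.
\item It disposes of the case $c_m=0$ by induction and otherwise normalises $c_m=1$.
\item It applies $D$ once to each relation $\sum_j c_j D^i(g_j)=0$ for $i\le m-2$. The terms $\sum_j c_j D^{i+1}(g_j)$ vanish by the next relation.
\item So $(D(c_1),\ldots,D(c_{m-1}))$ lies in the kernel of the Wronskian matrix of $g_1,\ldots,g_{m-1}$. By induction this forces $D(c_j)=0$, hence $c_j\in\F$.
\end{itemize}

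\textbf{What each buys.} The paper only needs the first-order Leibniz rule and never divides by a $g_i$. Its induction therefore stays with the original elements $g_1,\ldots,g_{m-1}\in R$, and the constants hypothesis is applied to the coefficients $c_j\in\K$. Your argument needs the higher-order Leibniz expansion and the triangularity bookkeeping. It also forces you to strengthen the statement to arbitrary $g_i\in\K$, since $g_j/g_1$ and its derivatives leave $R$. You handle this correctly, and it costs nothing because the constants hypothesis is already stated over $\K$. In exchange, you isolate a reusable structural identity, and at each level you invoke the constants hypothesis on a single element $\sum_j c_j\,g_j/g_1$ rather than on $m-1$ separate coefficients.
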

\begin{proof}
One direction is clear --- if $g_1,\ldots, g_m$ are linearly dependent then clearly $W_D(g_1,\ldots, g_m) = 0$. 

Thus, we are left to show that $W_D(g_1,\ldots, g_m) = 0$ implies that $g_1,\ldots, g_m$ are linearly dependent. The proof would be by induction on $m$ (the base case of $m = 1$ is immediate). Assume that $m \geq 2$ and $W_D(g_1,\ldots, g_m) = 0$. If $M$ refers to the matrix of the $D$-Wronskian, this implies that there is some nonzero $\vec{c} = (c_1,\ldots, c_m) \in \K^m$ such that $M \vec{c} = 0$. If $c_m = 0$, then $(c_1,\ldots, c_{m-1})$ is a nonzero kernel for the $(m-1)\times (m-1)$ $D$-Wronskian matrix of $g_1,\ldots, g_{m-1}$ which would force by induction that $g_1,\ldots, g_{m-1}$ are dependent. Thus, we may assume without loss of generality that $c_m = 1$ (by scaling $\vec{c}$ if necessary). 

Since $M \vec{c} = 0$, we have 
\begin{align*}
\text{For $i = 0,\ldots, m-1$}, \quad \sum c_j \cdot D^i (g_j) & = 0\\
\therefore  \text{For $i = 0,\ldots, m-2$}, \quad 0 = D\inparen{\sum_{j=1}^m c_j D^i (g_j)} & = \sum_{j=1}^m \inparen{D(c_j) \cdot D^i(g_j) + c_j \cdot D^{i+1}(g_j)}\\
 & = \sum_{j=1}^m D(c_j) \cdot D^i (g_j)\\
 & = \sum_{j=1}^{m-1} D(c_j) \cdot D^i(g_j).
\end{align*}
Thus, $(D(c_1), \ldots, D(c_{m-1}))$ is a kernel vector for the $D$-Wronskian matrix of $g_1,\ldots, g_{m-1}$. By induction, we know this matrix is invertible and hence $D(c_1) = D(c_2) = \cdots = D(c_{m-1})=0$ which forces each $c_i \in \F$ (and $c_m = 1$). Thus, focusing on just the first row of $M$ we have that $c_1 g_1 + \cdots + c_m g_m = 0$ and thus we have an $\F$-linear dependency between $g_1,\ldots, g_m$. 
\end{proof}

\subsection{A concrete derivation with trivial field of constants}

For this paper, we would need a derivation $D$ that has a trivial field of constants over $\K = \operatorname{Frac}(R)$. Note that $D$ having a trivial kernel over the ring $R = \F(t)[x_1,\ldots, x_n]$ does not necessarily imply that the kernel is trivial over $\K$ also (a standard example is $D = \partial_t + x^2 \partial_x$ which has a trivial kernel over $R$ but $D(t + \frac{1}{x}) = 0$). \\

\begin{definition}
    \label{defn:moura-derivation}
    Let $\tilde{D}:R \rightarrow R$ be the derivation given by 
    \[
    \tilde{D} := \partial_t  + \sum_{i=1}^n \frac{i}{1 - it} \cdot \partial_{x_i}
    \]
    and, clearing denominators, define 
    \begin{equation}
        \label{eqn:derivation-definition}
        D := Q(t) \cdot \tilde{D} = Q(t) \cdot \partial_t + \sum_{i=1}^n b_i(t) \cdot \partial_{x_i}
    \end{equation}
    where $Q(t) = \prod_{i=1}^n (1 - it)$ and $b_i(t) = i \cdot Q(t) / (1 - it) = i\prod_{j\neq i} (1 - jt)$.
\end{definition}

\begin{remark*}
There appears to be a rich literature on the study of \emph{derivation fields} with numerous examples of derivations that have a trivial field of constants. The derivation defined above appears to be a special case of a broader class of derivations called Shamsuddin derivations. Nowicki's book \cite{nowicki} provides a fairly comprehensive study of this field. 
\end{remark*}

\noindent
The following is a simple observation about the degrees of $D(g)$ compared to $g$. 

\begin{observation}[Degree bounds when acted by $D$]
    \label{obs:deg-bounds-D-action}
    For any $g \in \F[t][x_1,\ldots, x_n]$, we have 
    \begin{itemize}\itemsep 0pt
    \item $\deg_x D(g) \leq \deg_x g$
    \item $\deg_t D(g) \leq n + \deg_t g$.
    \end{itemize}
\end{observation}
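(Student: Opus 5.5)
The plan is to prove both bounds directly from the explicit form of $D$ in \eqref{eqn:derivation-definition}, by writing $D(g) = Q(t)\,\partial_t g + \sum_{i=1}^n b_i(t)\,\partial_{x_i} g$ and bounding the $x$-degree and $t$-degree of each summand separately. The degree of a sum is at most the maximum of the degrees of the summands, so it is enough to bound each term.

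First, the coefficients. $Q(t) = \prod_{i=1}^n (1-it)$ is a polynomial in $t$ of degree $n$. Each $b_i(t) = i\prod_{j\neq i}(1-jt)$ is a polynomial in $t$ of degree $n-1 \leq n$. None of these involve $x$, so multiplying by them leaves $\deg_x$ unchanged. Multiplying by them raises $\deg_t$ by at most $n$ in the case of $Q$, and by at most $n-1$ in the case of $b_i$.

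Next, the derivatives. The operator $\partial_t$ does not increase $\deg_x$; it maps each monomial $t^a x^\alpha$ to $a\,t^{a-1}x^\alpha$. It also lowers $\deg_t$ by one, or gives zero. The operator $\partial_{x_i}$ lowers $\deg_x$ by one, or gives zero, and does not increase $\deg_t$.

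Combining these, the term $Q\,\partial_t g$ has $\deg_x \leq \deg_x g$ and $\deg_t \leq n + \deg_t g - 1$. Each term $b_i\,\partial_{x_i} g$ has $\deg_x \leq \deg_x g - 1$ and $\deg_t \leq (n-1) + \deg_t g$. Taking the maximum over all terms gives $\deg_x D(g) \leq \deg_x g$ and $\deg_t D(g) \leq n + \deg_t g$, which proves the observation with some slack. The same computation shows $D$ maps $\F[t][x]$ into itself, since $Q$ and the $b_i$ are polynomials in $t$.

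There is no real obstacle here. The only point needing care is that the $t$-degree bound uses $\deg Q = n$, which is where the additive $n$ comes from. The convention $\deg 0 = -\infty$ handles the terms that vanish.
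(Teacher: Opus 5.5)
Your proof is correct and matches the paper's argument, which simply notes that both bounds follow from the definition of $D$ since $\deg Q = n$ and $\deg b_i = n-1$. You spell out the term-by-term bookkeeping that the paper leaves implicit, and you correctly observe the small slack $\deg_t D(g) \leq n - 1 + \deg_t g$.
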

\begin{proof}
Follows immediately from definition since $\deg Q(t) = n$ and $\deg b_i(t) = n-1$. 
\end{proof}

\noindent
We now present the key property of $D$, which is that it has a trivial field of constants. 

\begin{lemma}[$D$ has trivial field of constants]
    \label{lem:D-trivial-kernel}
    Let $D$ be the derivation from \cref{defn:moura-derivation}. Then, for all $h \in \K = \operatorname{Frac}(R)$, we have that $D(h) = 0$ if and only if $h \in \F$. In other words, $D$ has a trivial field of constants. 
\end{lemma}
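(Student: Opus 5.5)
Since $Q(t)\neq 0$, we have $D(h)=0$ iff $\tilde{D}(h)=0$, and clearly $D$ kills $\F$. So the task is to show that every $h\in\K$ with $D(h)=0$ is constant. The plan is a Darboux-polynomial argument entirely inside $S=\F[t][x_1,\ldots,x_n]$, which $D$ maps to itself. The one analytic input is the residue fact: in characteristic zero, the derivative of a rational function in $t$ never has a nonzero residue at a simple pole.

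\emph{Step 1 (reduction to Darboux polynomials).} Write $h=A/B$ with $A,B\in S$ coprime and $B\neq 0$. Then $D(h)=0$ gives $B\,D(A)=A\,D(B)$.
\begin{itemize}\itemsep0pt
\item By coprimality (unique factorisation in $S$), $A\mid D(A)$ and $B\mid D(B)$.
\item So $D(A)=\lambda A$ and $D(B)=\lambda B$ for a single $\lambda\in S$. If $A=0$ we are done, so assume $A\neq 0$.
\item By \cref{obs:deg-bounds-D-action}, $\deg_x D(A)\le \deg_x A$, which forces $\deg_x\lambda=0$, i.e.\ $\lambda\in\F[t]$.
\end{itemize}

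\emph{Step 2 (shape of $\lambda$).} Write $A=\sum_\alpha a_\alpha(t)x^\alpha$ and compare coefficients of $x^\alpha$ in $D(A)=\lambda A$:
\[
Q\,a_\alpha' + \sum_i (\alpha_i+1)\, b_i\, a_{\alpha+e_i} = \lambda\, a_\alpha .
\]
Take $\alpha$ of maximal total degree with $a_\alpha\neq 0$. Then the middle sum vanishes, so $a_\alpha'/a_\alpha=\lambda/Q$. Since $Q=\prod_i(1-it)$ has distinct simple roots, this gives the following.
\begin{itemize}\itemsep0pt
\item By partial fractions and logarithmic derivatives, $a_\alpha=c\prod_i(1-it)^{m_i}$ with $c\in\F^\times$ and $m_i\in\mathbb{Z}_{\ge 0}$.
\item Hence $\lambda/Q=\sum_i -i m_i/(1-it)$.
\end{itemize}
Set $E=\prod_i(1-it)^{m_i}$, so that $E'/E=\lambda/Q$. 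Then $C:=A/E\in\F(t)[x]$ satisfies $\tilde{D}(C)=\tfrac{1}{Q}\bigl(D(A)/E-A\,D(E)/E^2\bigr)=0$. The same holds for $B/E$, since $B$ has the same $\lambda$.

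\emph{Step 3 (polynomial kernel of $\tilde D$ over $\F(t)$).} This is the main step. I claim that if $C=\sum_\alpha c_\alpha(t)x^\alpha\in\F(t)[x]$ satisfies $\tilde{D}(C)=0$, then $C\in\F$.
\begin{itemize}\itemsep0pt
\item Let $d=\deg_x C$. The degree-$d$ part of $\tilde D(C)=0$ gives $c_\alpha'=0$ for all $|\alpha|=d$, so these $c_\alpha$ lie in $\F$.
\item If $d\ge 1$, choose $\alpha$ with $|\alpha|=d$, $c_\alpha\neq 0$, and some index $j$ with $\alpha_j\ge 1$. Put $\beta=\alpha-e_j$.
\item The coefficient of $x^\beta$ in $\tilde D(C)=0$ reads
\[
c_\beta' = -\sum_i \frac{i(\beta_i+1)}{1-it}\, c_{\beta+e_i},
\]
and every $c_{\beta+e_i}$ on the right is a constant, since $|\beta+e_i|=d$.
\item The right side therefore has a simple pole at $t=1/j$ with residue equal to a nonzero multiple of $\alpha_j c_\alpha\neq 0$. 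This uses characteristic zero and the fact that the points $1/i$ are distinct.
\item But $c_\beta'$ is the derivative of a rational function, so its residue at $1/j$ is $0$. This is a contradiction, so $d=0$.
\item With $d=0$ we get $C=c_0(t)$ and $c_0'=0$, so $C\in\F$.
\end{itemize}

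\emph{Conclusion.} By Step 3, $A=cE$ and $B=c'E$ with $c,c'\in\F$, so $h=c/c'\in\F$.

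I expect Step 3, the residue argument, to be the crux; the rest is bookkeeping with coprimality and degrees. The points needing care are:
\begin{itemize}\itemsep0pt
\item that $\lambda$ is a single polynomial shared by $A$ and $B$;
\item that $E$ is a genuine polynomial with nonnegative exponents, so that $C\in\F(t)[x]$ as required in Step 3;
\item checking that the exponents $m_i$ come out as nonnegative integers. This follows from $a_\alpha$ being a polynomial with roots only at the points $1/i$.
\end{itemize}
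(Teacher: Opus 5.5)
Your proof is correct and follows essentially the same route as the paper. Both arguments write $h=A/B$ in coprime form and get $D(A)=\lambda A$, $D(B)=\lambda B$ with a shared $\lambda$ that the degree bound forces to be $x$-free, then divide by the top-degree coefficient to land in the polynomial kernel of $\tilde D$ over $\F(t)[x]$, and finish with the top-degree/simple-pole argument (the paper proves that polynomial case first, with the same residue reasoning you do coefficientwise). Your explicit computation $a_\alpha=c\prod_i(1-it)^{m_i}$ is harmless but unnecessary: $E=a_\alpha/c$ anyway, and every nonzero element of $\F(t)$ is a unit in $\F(t)[x]$, so you could simply divide by $a_\alpha$ as the paper does.
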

\begin{proof}
    We will first show that for all $h \in R = \F(t)[x_1,\ldots, x_n]$ we have $D(h) = 0$ if and only if $h \in \F$. Suppose $h \in R$ with $D(h) = 0$. Let $h_i$ denote the degree $i$ (in $x$) homogeneous  component of $h$ and let $\deg_x h = d$. If $d = 0$, we are done and so we may assume that $d \geq 1$. Focusing on the degree $d$ part of $D(h) = 0$, we have
    \begin{align*}
    0 = \Hom_x^{=d}(D(h)) & = Q(t) \cdot \partial_t h_d\\
    \implies \partial_t h_d & = 0.
    \end{align*}
    Thus, the highest-degree homogeneous part of $h$ is $t$-free. Now focusing on the homogeneous part of degree $d-1$, we have
    \begin{align*}
    0 = \Hom_x^{=d-1}(D(h)) & = Q(t) \cdot \inparen{\partial_t h_{d-1} + \sum_{i=1}^n \frac{i}{1 - it} \cdot \partial_{x_i} h_d }\\
    \implies \partial_t h_{d-1} & = - \sum_{i=1}^n \frac{i}{1 - it} \cdot \partial_{x_i} h_d.
    \end{align*}
    Note that $h_d$ is independent of $t$ and thus so is $\partial_{x_i} h_d$. Hence, if $\partial_{x_i} h_d \neq 0$, then the RHS has a simple pole (of order $1$) at $t = 1/i$ but $\partial_t h_{d-1}$ can never have simple poles. (To see this, write the \emph{partial fraction} decomposition of $h_{d-1} \in \F(t)$ as 
    \[
    h_{d-1}(t) = p(t) + \sum_a \sum_{j \geq 1} \frac{c_{a,j}}{(t-a)^j}
    \]
    for some polynomial $p(t)$ and constants $c_{a,j} \in \F$ for finitely many $a$. Differentiating term by term, $p(t)$ contributes a polynomial and thus no poles, while each term $c_{a,j} (t-a)^{-j}$ contributes $-j c_{a,j} (t-a)^{-j-1}$ and thus a pole of order $j+1$ at $a$. Hence, $\partial_t h_{d-1}$ has no pole of order exactly $1$ at any point.) Thus, the only way the above can be zero is if $\partial_{x_i} h_d = 0$ for all $i \in [n]$, which is absurd when $d \geq 1$. Thus, $d = 0$ and hence $h \in \F$. \\

    To extend this to the fraction field, suppose $h = A / B$ where $A, B \in \F(t)[x_1,\ldots, x_n]$ with $A$ and $B$ being coprime, with $D(h) = 0$. Then, 
    \begin{align*}
    D(A) & = D(B \cdot h) = B D(h) + h D(B)\\
        & = h \cdot D(B) = \frac{A}{B} \cdot D(B)\\
    \implies B \cdot D(A) & = A \cdot D(B). 
    \end{align*}
    Since $A$ and $B$ are coprime, this implies that $B$ divides $D(B)$ and $A$ divides $D(A)$. Since $\deg_x D(A) \leq \deg_x A$ this implies that $D(A) = c A$ and $D(B) = cB$ for some $c \in \F(t)$. Consider the highest degree (in $x$) part $A_d$ of $A$, and let $\alpha(t)$ be the coefficient of some degree-$d$ (in $x$) monomial $m$ of $A$. Then, by tracing the coefficient of $m$ in the equality $D(A) = c A$, we get
    \begin{align*}
    Q(t) \cdot \partial_t \alpha & = c \cdot \alpha\\
    \implies D\inparen{\frac{A}{\alpha}} & = \frac{D(A)}{\alpha} + A \cdot D(1/\alpha) = \frac{D(A)}{\alpha} - \frac{A \cdot Q(t) \cdot \partial_t \alpha}{\alpha^2}\\
    & = A \cdot \inparen{\frac{c }{\alpha} - \frac{Q(t) \cdot \partial_t \alpha}{\alpha^2}} = 0.
    \end{align*}
    But since $\tfrac{A}{\alpha} \in \F(t)[x_1,\ldots, x_n]$, we have that $\tfrac{A}{\alpha} \in \F$. Similarly, $\tfrac{B}{\alpha} \in \F$, and thus $\tfrac{A}{B} \in \F$ as well. 
\end{proof}

\subsection{Relating $D$ to the map $P$}

We now connect the above derivation to the map $(P_1,\ldots, P_n)$ with $P_i = \sum_{j \geq 1} \frac{(it)^j}{j} = -\ln(1 - it)$. Note that $P_i$'s are power series in $t$. For any polynomial $f(t, x_1,\ldots, x_n) \in \F[t][x_1,\ldots, x_n]$, the composition $f \circ P = f(P_1,\ldots, P_n)$ is a well-defined element of $\F\indsquare{t}$. 

\begin{lemma}[Relating $D$ to the map $P$]
    \label{lem:relating-D-to-P}
    Let $P_i(t) \in \F\indsquare{t}$ be defined as $\sum_{j \geq 1} \frac{(it)^j}{j}$, the Taylor series of $-\ln(1-it)$, and let $P = (P_1,\ldots, P_n)$. Then, for any $f(t,x_1,\ldots, x_n) \in \F[t][x_1,\ldots, x_n]$, we have 
    \[
    Q(t) \cdot \frac{d}{dt} \inparen{f \circ P} = D(f) \circ P
    \]
    as elements of $\F\indsquare{t}$. 
\end{lemma}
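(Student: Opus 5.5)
The identity is a formal chain rule in $\F\indsquare{t}$, with one extra input: the formal derivative of $P_i$ multiplied by $Q(t)$ gives exactly $b_i(t)$. I would first compute $\frac{d}{dt}P_i = \sum_{j\geq 1} i\,(it)^{j-1} = \frac{i}{1-it}$, as elements of $\F\indsquare{t}$, where $1-it$ is invertible with inverse $\sum_{k\geq 0}(it)^k$. Multiplying by $Q(t)$ then gives
\[
Q(t)\cdot \frac{d P_i}{dt} = \frac{i\, Q(t)}{1-it} = i\prod_{j\neq i}(1-jt) = b_i(t),
\]
which is an honest polynomial identity in $\F\indsquare{t}$.

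Next I would prove the formal chain rule
\[
\frac{d}{dt}\bigl(f\circ P\bigr) = (\partial_t f)\circ P + \sum_{i=1}^n \frac{dP_i}{dt}\cdot (\partial_{x_i} f)\circ P
\]
for every $f\in \F[t][x_1,\ldots,x_n]$. Both sides are $\F$-linear in $f$, so it suffices to check it on monomials $t^a x_1^{e_1}\cdots x_n^{e_n}$. Here $f\circ P = t^a P_1^{e_1}\cdots P_n^{e_n}$ is a product of power series. The formal derivative $\frac{d}{dt}$ on $\F\indsquare{t}$ is $\F$-linear and satisfies the Leibniz rule; this holds coefficientwise, exactly as for polynomials. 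So expanding the derivative of the product gives $a t^{a-1}\prod P_i^{e_i} + \sum_i t^a e_i P_i^{e_i-1}P_i'\prod_{j\ne i}P_j^{e_j}$, which is precisely the right-hand side evaluated on the monomial.

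Finally, I would multiply the chain rule by $Q(t)$ and substitute $Q\,P_i' = b_i$. This gives
\[
Q\cdot \frac{d}{dt}(f\circ P) = \bigl(Q\,\partial_t f + \textstyle\sum_i b_i\,\partial_{x_i} f\bigr)\circ P = D(f)\circ P.
\]
Here I use that composition with $P$ is a ring homomorphism $\F[t][x]\to\F\indsquare{t}$ fixing $\F[t]$, so polynomial coefficients in $t$ pass through unchanged.

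There is no real obstacle. The only point needing care is well-definedness. Since $P_i$ has zero constant term (indeed even without this, because $f$ is a polynomial in $x$), $f\circ P$ is a finite sum of products of power series and so is a valid element of $\F\indsquare{t}$. The Leibniz rule for the formal derivative on $\F\indsquare{t}$ also has to be verified, which is a routine coefficient comparison.
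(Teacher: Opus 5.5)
Your proposal is correct and is essentially the paper's proof. Compute $\frac{dP_i}{dt} = \frac{i}{1-it}$, apply the chain rule, and multiply through by $Q(t)$; then $Q(t)\cdot\frac{dP_i}{dt} = b_i(t)$ turns the right-hand side into $D(f)\circ P$, and your monomial-by-monomial check of the formal chain rule and the well-definedness remarks just make explicit what the paper leaves implicit.
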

\begin{proof}
    Note that $\frac{d P_i}{dt} = \frac{i}{1 - it}$. Then, by the chain rule of derivatives, 
    \begin{align*}
    \frac{d}{dt} f(t, P_1,\ldots, P_n) & = \partial_t f + \sum_{i=1}^n \inparen{(\partial_{x_i} f) \circ P} \cdot \frac{d P_i}{dt} \\
    & = \partial_t f + \sum_{i=1}^n \inparen{(\partial_{x_i} f) \circ P} \cdot \frac{i}{1-it}\\
    & = \frac{1}{Q(t)} \cdot \inparen{D(f) \circ P}. \qedhere
    \end{align*}
\end{proof}

\begin{corollary}
    \label{cor:t-order-of-Df}
    For any $B \geq i \geq 0$ and $f(t,x_1,\ldots, x_n) \in \F[t][x_1,\ldots, x_n]$, if $f \circ P = 0 \bmod t^B$ then $D^i(f) \circ P = 0 \bmod{t^{B- i}}$. 
\end{corollary}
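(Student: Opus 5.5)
The plan is to argue by induction on $i$, using \cref{lem:relating-D-to-P} at each step. The base case $i = 0$ is exactly the hypothesis $f \circ P = 0 \bmod t^{B}$.

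For the inductive step, suppose $1 \leq i \leq B$ and that $D^{i-1}(f) \circ P = 0 \bmod t^{B-i+1}$. We apply \cref{lem:relating-D-to-P} to the polynomial $g := D^{i-1}(f)$. This is legitimate because $D$ maps $\F[t][x_1,\ldots,x_n]$ into itself: $Q$ and the $b_i$ are polynomials in $t$. The lemma gives
\[
D^{i}(f) \circ P = D(g) \circ P = Q(t) \cdot \frac{d}{dt}\inparen{g \circ P}.
\]

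Now $g \circ P \in \F\indsquare{t}$ is divisible by $t^{B-i+1}$, so we may write it as $t^{B-i+1} h(t)$ with $h \in \F\indsquare{t}$. Formal differentiation then gives
\[
\frac{d}{dt}\inparen{g\circ P} = (B-i+1)\, t^{B-i} h + t^{B-i+1} h',
\]
which is divisible by $t^{B-i}$. Since $B - i \geq 0$, this is a genuine power of $t$. Multiplying by the polynomial $Q(t)$ preserves divisibility by $t^{B-i}$, so $D^i(f) \circ P = 0 \bmod t^{B-i}$, which completes the induction.

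There is no real obstacle here. The only points to check are that $D$ preserves the polynomial ring $\F[t][x]$, so that \cref{lem:relating-D-to-P} applies at every step, and that differentiating a power series lowers its $t$-adic order by at most one. Both are immediate. Note that the factor $Q(t)$ has $Q(0)=1$, but we never need to divide by it, since we only use that multiplying by it cannot destroy divisibility by $t^{B-i}$.
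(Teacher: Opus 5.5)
Your proof is correct and follows the paper's argument: induction on $i$, applying \cref{lem:relating-D-to-P} to $D^{i-1}(f)$, writing its composition with $P$ as $t^{B-i+1}h(t)$, and observing that differentiating and then multiplying by $Q(t)$ still leaves a factor of $t^{B-i}$. You also state explicitly that $D$ maps $\F[t][x_1,\ldots,x_n]$ into itself, so the lemma applies at every step; the paper leaves this implicit.
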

\begin{proof}
    The proof follows by induction on $i$ via applications of \cref{lem:relating-D-to-P}. 
    The case of $i = 0$ is trivial. Assume that $D^i(f) \circ P = t^{B-i} \cdot h(t)$ for some $h(t) \in \F\indsquare{t}$. Then,
    \begin{align*}
    D^{i+1}(f) \circ P & = D(D^i(f)) \circ P = Q(t) \cdot \frac{d}{dt}\inparen{D^i(f) \circ P}\\
    & = Q(t) \cdot \frac{d}{dt}(t^{B-i} \cdot h(t)) \\
    & = 0 \bmod t^{B-(i+1)}.
    \end{align*}
    That completes the inductive step, and thus the proof of the corollary. 
\end{proof}

\subsection{$D$-Wronskian of a derivative-closed set}

The following is a simple yet important observation that makes the above machinary applicable to the setting of polynomials of low dimensional partial derivative space. 

\begin{lemma}[Wronskian of derivative-closed sets are $x$-independent]
    \label{lem:wronskians-depend-on-t}
    Let $D: R \rightarrow R$ be the derivation defined in \cref{defn:moura-derivation}, and let $g_1,\ldots, g_m \in \F[x_1,\ldots, x_n]$ be a set of polynomials such that for all $i \in [m]$ and $j \in [n]$ we have that $\partial_{x_j} g_i \in \operatorname{\F-span}(g_1,\ldots, g_m)$. Then, the $D$-Wronskian $W_D(g_1,\ldots, g_m)$ is a rational function in $t$ and is independent of the $x_i$'s. 

    Furthermore, if $g_1,\ldots, g_m$ are linearly independent and $D$ is a derivation with a trivial field of constants, then $W_D(g_1,\ldots, g_m)$ is a nonzero element of $\F(t)$. 
\end{lemma}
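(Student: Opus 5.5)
The plan is to show that $\partial_{x_k} W_D(g_1,\ldots,g_m) = 0$ for every $k\in[n]$. The key structural fact is that $D$ commutes with each $\partial_{x_k}$. Indeed, $D = Q(t)\partial_t + \sum_i b_i(t)\partial_{x_i}$ has coefficients depending only on $t$, and $\partial_{x_k}$ commutes with $\partial_t$, with each $\partial_{x_i}$, and with multiplication by elements of $\F(t)$. Since the $g_j$ lie in $\F[x]$, closure gives constants $a^{(k)}_{j\ell}\in\F$ with $\partial_{x_k} g_j = \sum_\ell a^{(k)}_{j\ell} g_\ell$. Applying $D^i$ and using commutation together with $\F$-linearity of $D$, we get
\[
\partial_{x_k} D^i(g_j) = D^i(\partial_{x_k} g_j) = \sum_{\ell} a^{(k)}_{j\ell} D^i(g_\ell).
\]
Thus, if $M$ is the Wronskian matrix (with columns indexed by $j$) and $A_k = (a^{(k)}_{j\ell})$, then entrywise $\partial_{x_k} M = M A_k^{T}$.

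Next I differentiate the determinant. The operator $\partial_{x_k}$ is a derivation on $\K$, so $\partial_{x_k}\det M = \det M\cdot \operatorname{tr}(M^{-1}\partial_{x_k}M)$ when $M$ is invertible. In general, Jacobi's formula $\partial\det M = \operatorname{tr}(\operatorname{adj}(M)\,\partial M)$ holds for any derivation, being a sum over columns of determinants with one column differentiated. Substituting $\partial_{x_k}M = MA_k^T$ gives
\[
\partial_{x_k}\det M = \operatorname{tr}(\operatorname{adj}(M) M A_k^T) = \det M\cdot\operatorname{tr}(A_k).
\]
So $W := W_D(g_1,\ldots,g_m)$ satisfies $\partial_{x_k}W = \operatorname{tr}(A_k)\,W$ with $\operatorname{tr}(A_k)\in\F$.

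Now $W$ lies in $\F[t,x]$ (up to the $\F(t)$ coefficients of $D$), so it is a polynomial in $x$ over $\F(t)$. Differentiation in $x_k$ strictly lowers $\deg_{x_k}$ of a nonzero polynomial, so $\partial_{x_k}W = cW$ with $c\in\F$ forces either $W=0$ or $c=0$. In both cases $\partial_{x_k}W=0$. Since $\operatorname{char}\F=0$, vanishing of all $\partial_{x_k}W$ means $W\in\F(t)$. This degree comparison is the step I expect to need the most care: one has to avoid assuming the trace vanishes, and route the argument through the degree comparison instead. An alternative that avoids Jacobi's formula is to note that the $g$'s span a space closed under the nilpotent operators $\partial_{x_k}$ (they lower total degree), so $A_k$ is nilpotent in a suitable basis and $\operatorname{tr}A_k = 0$ directly; either route works.

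For the ``furthermore'' part, if the $g_j$ are $\F$-linearly independent and $D$ has a trivial field of constants (\cref{lem:D-trivial-kernel}), then \cref{lem:D-wronskian-linear-independence} gives $W\neq 0$. Combined with the above, $W$ is a nonzero element of $\F(t)$; in fact it lies in $\F[t]$, since $D$ maps $\F[t][x]$ to itself.
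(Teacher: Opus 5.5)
Your proof is correct, but it takes a different route from the paper. The paper arranges the $g$'s so that each $\partial_{x_i} g_k$ lies in $\operatorname{span}(g_1,\ldots,g_{k-1})$, i.e.\ the coefficient matrix is strictly triangular. Expanding $\partial_{x_i}\det M$ column by column, every term then has a repeated column and vanishes on its own.

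You instead allow an arbitrary coefficient matrix $A_k$. You obtain $\partial_{x_k}W=\operatorname{tr}(A_k)\,W$ from Jacobi's formula, and then eliminate the scalar by the $\deg_{x_k}$ comparison. The paper's computation is the special case of yours in which the trace vanishes term by term. In general, the terms where a differentiated column contributes a multiple of itself are exactly what produce $\operatorname{tr}(A_k)\det M$.

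The paper's version avoids adjugates. Yours needs no special basis and works verbatim for any $g_1,\ldots,g_m$ satisfying the closure hypothesis, dependent or not. That is a real convenience, because merely sorting a basis by degree does not give triangularity. For the basis $\{1,\,x^2+x,\,x^2\}$ one has $\partial_x(x^2)=2(x^2+x)-2x^2$ (and similarly for the other tie-break). So the paper's ``we may assume'' tacitly requires two extra steps: passing to a basis adapted to the filtration $V\cap\F[x]_{\le d}$ of the span $V$, and noting that a change of basis rescales $W_D$ by a nonzero constant in $\F$.

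You also check $\partial_{x_k}D=D\,\partial_{x_k}$ on all of $R$, which is what the iteration to $D^i$ actually uses; the paper verifies it only on $t$-free inputs. The ``furthermore'' part is handled identically in both.
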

\begin{proof} 
    As taking partial derivatives with respect to any $x_i$ reduces the degree in $x_i$, we may assume that $g_1,\ldots, g_m$ are arranged in increasing order of degree. Thus, 
    \[ 
        \partial_{x_i} g_k = \sum_{j=1}^{k-1} c_{ijk} g_j 
    \]
    for some constants $c_{ijk} \in \F$. Also, since $g_k$ does not depend on $t$ we have $\partial_t g_k = 0$ and hence
    \begin{align*} 
    \partial_{x_i} (D(g_k))  &= \partial_{x_i} \inparen{Q(t)\cdot \partial_{t} g_k + Q(t) \cdot \sum_{j=1}^n \frac{j}{1-jt} \cdot \partial_{x_j} g_k}  \\
    &= \partial_{x_i} \inparen{Q(t) \cdot \sum_{j=1}^n \frac{j}{1-jt} \cdot \partial_{x_j} g_k}  \\
    &= Q(t) \cdot \sum_{j=1}^n \frac{j}{1-jt} \cdot \partial_{x_i} \partial_{x_j} g_k \\
    &= D(\partial_{x_i} g_k).
    \end{align*}

    Therefore, we have $\partial_{x_i} (D^\ell(g_k)) = D^\ell(\partial_{x_i} g_k)$ and hence
    \[ 
        \partial_{x_i} (D^\ell(g_k)) = D^\ell\inparen{\sum_{j=1}^{k-1} c_{ijk} g_j} = \sum_{j=1}^{k-1} c_{ijk} D^\ell(g_j) 
    \]
    Thus, if $C_k$ is the $k$-th column of the Wronskian matrix, we have that
    \[ 
    \partial_{x_i} C_k = \sum_{j=1}^{k-1} c_{ijk} C_j
    \]  
    Hence taking partial derivatives of $W_D(g_1, g_2, \ldots, g_m)$ with respect to $x_i$ and using the product rule for determinants, we have
    \begin{align*}
    \partial_{x_i} W_D(g_1, g_2, \ldots, g_m) & = \sum_{k=1}^m \det(C_1, \ldots, C_{k-1}, \partial_{x_i} C_k, C_{k+1}, \ldots, C_m)\\
    & = \sum_{k=1}^m \sum_{j < k} c_{ijk} \det(C_1, \ldots, C_{k-1}, C_j, C_{k+1}, \ldots, C_m)\\
    & = 0.
    \end{align*}
    Thus, since $\partial_{x_i}W_D(g_1,\ldots, g_m) = 0$ for each $i \in [n]$, we have that $W_D(g_1,\ldots, g_m)$ must indeed be independent of the $x_i$'s and is just a rational function in $t$. \\

    And if $g_1,\ldots, g_m$ are linearly independent and $D$ has a trivial field of constants, then \cref{lem:D-wronskian-linear-independence} asserts that $W_D(g_1,\ldots, g_m)$ is in fact a nonzero element of $\F(t)$. 
\end{proof}

\begin{remark*}
The lemma holds more generally for any derivation $D$ of the form $a_0(t) \partial_t + \sum_i a_i(t) \partial_{x_i}$ that satisfies $D(\partial_{x_i} f) = \partial_{x_i} D(f)$. 
\end{remark*}

\section{Constructing the hitting set}

Now we have all the ingredients to prove \cref{thm:main-thm}. We state it here in a slightly different form from which \cref{thm:main-thm} would readily follow. 

\begin{theorem}
    \label{thm:main-thm-modified}
    Suppose $0 \neq f(x_1,\ldots, x_n) \in \F[x_1,\ldots, x_n]$ such that $\dim \partial^{=*}(f) \leq r$. Let $P_1(t),\ldots, P_n(t) \in \F\indsquare{t}$ be defined as
    \[
    P_i(t) = \sum_{j \geq 1} \frac{(it)^j}{j}
    \]
    Then, $f(P_1(t),\cdots, P_n(t)) \neq 0 \bmod t^{B}$ for $B = nr^2 + r$.
\end{theorem}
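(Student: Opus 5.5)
We argue by contradiction, following the outline in the proof overview. Suppose $f \circ P = 0 \bmod t^B$ with $B = nr^2 + r$.

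\textbf{Setting up the basis.} Let $m = \dim \partial^{=*}(f) \leq r$. Choose a basis $g_1,\ldots,g_m \in \F[x_1,\ldots,x_n]$ of $\partial^{=*}(f)$, ordered by increasing degree, with $g_m = f$. This is possible since $f$ has maximal degree among its partial derivatives, and $f \neq 0$. The set is closed under each $\partial_{x_j}$. So by \cref{lem:wronskians-depend-on-t} together with \cref{lem:D-trivial-kernel}, the Wronskian $W := W_D(g_1,\ldots,g_m)$ is a nonzero element of $\F(t)$.

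\textbf{$W$ is a polynomial of bounded degree.} In fact $W$ is a polynomial in $t$. Each $D^i(g_j)$ lies in $\F[t][x]$, because $Q$ and the $b_i$ are polynomials. Hence $W \in \F[t][x]$, and being $x$-free, $W \in \F[t]$. By \cref{obs:deg-bounds-D-action}, $\deg_t D^i(g_j) \leq in$. Each term of the determinant expansion then has $t$-degree at most
\[
n\sum_{i=0}^{m-1} i = n\binom{m}{2} < nr^2.
\]
So $W$ is a nonzero polynomial in $t$ with $\deg W < nr^2$.

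\textbf{Substituting $P$ and reading off divisibility.} Since $W$ does not depend on $x$, we have $W = W \circ P$. Composition with $P$ is a ring homomorphism $\F[t][x] \to \F\indsquare{t}$, so it commutes with taking determinants. Thus $W$ equals the determinant of the matrix with entries $D^i(g_j) \circ P$.

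By \cref{cor:t-order-of-Df} applied to $g_m = f$, each entry $D^i(f)\circ P$ of the last column, for $0 \le i \le m-1$, is divisible by $t^{B-i}$. In particular each is divisible by $t^{B-m+1}$. All other entries lie in $\F\indsquare{t}$. Expanding the determinant along the last column gives $t^{B-m+1} \mid W$ in $\F\indsquare{t}$, and hence in $\F[t]$.

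Now $B - m + 1 \geq nr^2 + 1 > \deg W$. A nonzero polynomial cannot be divisible by a power of $t$ exceeding its degree, so $W = 0$, which is a contradiction.

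\textbf{Main obstacle.} The delicate point is justifying the interchange $W = \det\big[(D^i g_j)\circ P\big]$. This requires $W$ to be genuinely $x$-free as an element of $\F[t][x]$, not merely of $\F(t)(x)$, so that evaluation at power series is legitimate. \cref{lem:wronskians-depend-on-t} gives $x$-independence over $\F(t)$. Combining it with polynomiality of the entries settles this.

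Beyond that, only the degree bookkeeping remains. The generous choice $B = nr^2 + r$ comfortably absorbs both the $n\binom{m}{2}$ degree bound and the loss of $m-1$ in the $t$-adic order.
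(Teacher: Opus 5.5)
Your proof is correct and follows the paper's argument essentially step for step. You take a basis of the derivative-closed space with $f$ in the last column and get a nonzero $x$-free Wronskian; polynomiality of the entries places $W$ in $\F[t]$ with degree at most $nr^2$; and the $t$-adic divisibility of the last column from \cref{cor:t-order-of-Df} gives the contradiction. The only differences are cosmetic: you work with $m \le r$ directly rather than assuming $\dim \partial^{=*}(f) = r$, and you use the slightly sharper bound $n\binom{m}{2}$ on $\deg W$.
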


\cref{thm:main-thm} follows from the above since we may truncate each $P_i$ to the first $B$ terms and retain the nonzeroness of $f$. 

\begin{proof}[Proof of \cref{thm:main-thm-modified}]
    For simplicity, let us assume that $\dim \partial^{=*} f = r$ (else we will replace $r$ with the actual dimension in the proof below). Let $g_1, g_2, \ldots, g_r$ be a basis of the space $V_f$ of all partial derivatives of $f$ in increasing order of degree with $g_r=f$. Let $P(t) = (P_1(t), P_2(t), \ldots, P_n(t))$ where $P_i(t) = -\ln(1-it)$ for $i = 1, \ldots, n$. 
    
    Assume on the contrary that $f \circ P$ is divisible by $t^B$ for $B = nr^2 + r$. Consider the $D$-Wronskian $W_D(g_1,\ldots, g_r)$. By \cref{lem:D-trivial-kernel}, we know $D$ has a trivial field of constants, and \cref{lem:D-wronskian-linear-independence} asserts that $W_D(g_1,\ldots, g_r)$ is nonzero since $g_1,\ldots, g_r$ are linearly independent. Since each entry $D^i(g_j)$ of the Wronskian matrix is in fact a polynomial in $t, x_1,\ldots, x_n$, we have that $W_D(g_1,\ldots, g_r)$ is a nonzero element of $\F[t,x_1,\ldots, x_n]$. Furthermore, by \cref{lem:wronskians-depend-on-t}, it is independent of $x$ and hence $W_D(g_1,\ldots, g_r) = w(t)$ is nonzero polynomial in $t$. By \cref{obs:deg-bounds-D-action}, row $i$ of the $D$-Wronskian matrix of $W_D(g_1,\ldots, g_r)$ has polynomials of degree at most $n\cdot (i-1)$ in $t$ and therefore $w(t)$ is a nonzero polynomial of degree at most $nr^2$. 

    On the other hand, $W_D(g_1,\ldots, g_r) \circ P = w(t)$ (since $w$ is independent of $x$-variables), and the last column of the matrix is the vector $[f \circ P, Df \circ P, D^2f \circ P, \cdots, D^{r-1}f \circ P]$. By \cref{cor:t-order-of-Df}, this entire column is therefore divisible by $t^{B-(r-1)}$. Therefore, $w(t)$ is divisible by $t^{B - (r-1)}$, but this is impossible if $B \geq nr^2 + r$ since $w(t)$ is a polynomial of degree at most $nr^2$. 

    Thus we have that $f \circ P \neq 0 \bmod t^{nr^2 + r}$. 
\end{proof}

\paragraph{Acknowledgements:} We'd like to thank several members in the community (especially Robert Andrews, Umang Bhaskar, Prerona Chatterjee, Prahladh Harsha, Mrinal Kumar, Anamay Tengse) for numerous philophical discussions on whether to not to publish this. \textcolor{white}{We also thank the Simons Institute for the Theory of Computing for providing access to ChatGPT Astra to a friend of ours.}

{\let\thefootnote\relax
\footnotetext{\textcolor{\gitinfonotecolour}{\gitinfonote \easteregg}
}}
\bibliographystyle{customurlbst/alphaurlpp}
\bibliography{references,crossref}

@proceedings{conf/fsttcs/2005,
  editor    = {Ramaswamy Ramanujam and
               Sandeep Sen},
  title     = {{FSTTCS} 2005: Foundations of Software Technology and Theoretical
               Computer Science, 25th International Conference, Hyderabad, India,
               December 15-18, 2005, Proceedings},
  series    = {Lecture Notes in Computer Science},
  volume    = {3821},
  publisher = {Springer},
  year      = {2005},
  url       = {https://doi.org/10.1007/11590156},
  doi       = {10.1007/11590156},
  isbn      = {3-540-30495-9},
  bibsource = {dblp computer science bibliography, https://dblp.org}
}

@proceedings{conf/stoc/1980,
  editor    = {Raymond E. Miller and
               Seymour Ginsburg and
               Walter A. Burkhard and
               Richard J. Lipton},
  title     = {Proceedings of the 12th Annual {ACM} Symposium on Theory of Computing,
               April 28-30, 1980, Los Angeles, California, {USA}},
  publisher = {{ACM}},
  year      = {1980},
  bibsource = {dblp computer science bibliography, https://dblp.org}
}

@STRING{ccc	= "Conference on Computational Complexity (CCC)" }

@STRING{eccc	= "Electronic Colloquium on Computational Complexity (ECCC)" }

@STRING{focs	= "Foundations of Computer Science (FOCS)" }

@STRING{fsttcs	= "Foundations of Software Technology and Theoretical
		  Computer Science Science (FSTTCS)" }

@STRING{siamjc	= "SIAM Journal of Computing" }

@STRING{stoc	= "Symposium on Theory of Computing (STOC)" }

@Misc{astra_proof,
  author  = "OpenAI GPT 6 Astra",
  title = "Building a polynomial hitting set",
  year    = 2026,
  url = {https://www.tcs.tifr.res.in/~ramprasad/assets/pubs/astra-low-pds-hsg.pdf}
}

@article{Moura_2004,
   title={On the multiplicity of hyperelliptic integrals},
   volume={17},
   ISSN={1361-6544},
   url={http://dx.doi.org/10.1088/0951-7715/17/6/004},
   DOI={10.1088/0951-7715/17/6/004},
   number={6},
   journal={Nonlinearity},
   publisher={IOP Publishing},
   author={Moura, Claire},
   year={2004},
   month=Aug, pages={2057-2068} }

@inproceedings{GOSSS26,
  author       = {Abhibhav Garg and
                  Rafael Oliveira and
                  Akash Kumar Sengupta and
                  Nir Shalmon and
                  Amir Shpilka},
  editor       = {Dana Moshkovitz},
  title        = {Rank Bounds and Polynomial-Time {PIT} for {\(\Sigma\)}{\^{}}k {\(\Pi\)}
                  {\(\Sigma\)} {\(\Pi\)}{\({^2}\)} Circuits},
  booktitle    = {41st Computational Complexity Conference, {CCC} 2026, Lisbon, Portugal,
                  August 3-6, 2026},
  series       = {LIPIcs},
  volume       = {383},
  pages        = {17:1--17:18},
  publisher    = {Schloss Dagstuhl - Leibniz-Zentrum f{\"{u}}r Informatik},
  year         = {2026},
  url          = {https://doi.org/10.4230/LIPIcs.CCC.2026.17},
  doi          = {10.4230/LIPICS.CCC.2026.17},
  bibsource    = {dblp computer science bibliography, https://dblp.org}
}

@inproceedings{GG20,
  author       = {Zeyu Guo and
                  Rohit Gurjar},
  editor       = {Jaroslaw Byrka and
                  Raghu Meka},
  title        = {Improved Explicit Hitting-Sets for ROABPs},
  booktitle    = {Approximation, Randomization, and Combinatorial Optimization. Algorithms
                  and Techniques, {APPROX/RANDOM} 2020, Virtual Conference, August 17-19,
                  2020},
  series       = {LIPIcs},
  volume       = {176},
  pages        = {4:1--4:16},
  publisher    = {Schloss Dagstuhl - Leibniz-Zentrum f{\"{u}}r Informatik},
  year         = {2020},
  url          = {https://doi.org/10.4230/LIPIcs.APPROX/RANDOM.2020.4},
  doi          = {10.4230/LIPICS.APPROX/RANDOM.2020.4},
  bibsource    = {dblp computer science bibliography, https://dblp.org}
}

@book{nowicki,
author = {Nowicki, Andrzej},
year = {1994},
month = {10},
pages = {},
title = {Polynomial derivations and their rings of constants},
isbn = {83-231-0543-X},
doi = {10.13140/2.1.3906.3844}
}

@inproceedings{BKRRSS26,
  author       = {Somnath Bhattacharjee and
                  Mrinal Kumar and
                  Shanthanu S. Rai and
                  Varun Ramanathan and
                  Ramprasad Saptharishi and
                  Shubhangi Saraf},
  editor       = {Aditya Bhaskara and
                  Artur Czumaj},
  title        = {Closure under Factorization from a Result of Furstenberg},
  booktitle    = {Proceedings of the 58th Annual {ACM} Symposium on Theory of Computing,
                  {STOC} 2026, Salt Lake City, UT, USA, June 22-26, 2026},
  pages        = {174--185},
  publisher    = {{ACM}},
  year         = {2026},
  url          = {https://doi.org/10.1145/3798129.3800738},
  doi          = {10.1145/3798129.3800738},
  bibsource    = {dblp computer science bibliography, https://dblp.org}
}

@article{Oliveira16,
  author       = {Rafael Oliveira},
  title        = {Factors of low individual degree polynomials},
  journal      = {Comput. Complex.},
  volume       = {25},
  number       = {2},
  pages        = {507--561},
  year         = {2016},
  url          = {https://doi.org/10.1007/s00037-016-0130-2},
  doi          = {10.1007/S00037-016-0130-2}
}

@inproceedings{AF22,
  author       = {Robert Andrews and
                  Michael A. Forbes},

  title        = {Ideals, determinants, and straightening: proving and using lower bounds
                  for polynomial ideals},
  booktitle    = {{STOC} '22: 54th Annual {ACM} {SIGACT} Symposium on Theory of Computing,
                  Rome, Italy, June 20 - 24, 2022},
  pages        = {389--402},
  publisher    = {{ACM}},
  year         = {2022},
  url          = {https://doi.org/10.1145/3519935.3520025},
  doi          = {10.1145/3519935.3520025},
}

@inproceedings{GKSS19,
  author       = {Zeyu Guo and
                  Mrinal Kumar and
                  Ramprasad Saptharishi and
                  Noam Solomon},
  editor       = {David Zuckerman},
  title        = {Derandomization from Algebraic Hardness: Treading the Borders},
  booktitle    = {60th {IEEE} Annual Symposium on Foundations of Computer Science, {FOCS}
                  2019, Baltimore, Maryland, USA, November 9-12, 2019},
  pages        = {147--157},
  publisher    = {{IEEE} Computer Society},
  year         = {2019},
  url          = {https://doi.org/10.1109/FOCS.2019.00018},
  doi          = {10.1109/FOCS.2019.00018}
}

@article{ChouKS19,
 author = {Chou, Chi-Ning and Kumar, Mrinal and Solomon, Noam},
 title = {Closure Results for Polynomial Factorization},
 year = {2019},
 pages = {1--34},
 doi = {10.4086/toc.2019.v015a013},
 publisher = {Theory of Computing},
 journal = {Theory of Computing},
 volume = {15},
 number = {13},
 URL = {https://theoryofcomputing.org/articles/v015a013},
}

@InProceedings{	  A05a,
  author	= {Manindra Agrawal},
  title		= {{P}roving {L}ower {B}ounds {V}ia {P}seudo-random
		  {G}enerators},
  booktitle	= {\FSTTCS{2005}},
  crossref      = {conf/fsttcs/2005},
  year		= 2005,
  pages		= {92-105},
  doi		= {10.1007/11590156_6}
}

@Article{	  AGKS15,
  author	= {Manindra Agrawal and Rohit Gurjar and Arpita Korwar and
		  Nitin Saxena},
  title		= {Hitting-Sets for {ROABP} and Sum of Set-Multilinear
		  Circuits},
  journal	= siamjc,
  volume	= 44,
  number	= 3,
  pages		= {669--697},
  year		= 2015,
  url		= {http://dx.doi.org/10.1137/140975103},
  doi		= {10.1137/140975103},
  eprint	= {1406.7535}
}

@inproceedings{ASS13,
  author    = {Manindra Agrawal and
               Chandan Saha and
               Nitin Saxena},
  title     = {Quasi-polynomial hitting-set for set-depth-{\(\Delta\)} formulas},
  booktitle = {\STOC{2013}},
  pages     = {321--330},
  year      = {2013},
  url       = {http://doi.acm.org/10.1145/2488608.2488649},
  doi       = {10.1145/2488608.2488649},
  note      = {\shortECCC{12}{113}}
}

@InProceedings{	  CK97,
  author	= {Zhi-Zhong Chen and Ming-Yang Kao},
  title		= {{Reducing Randomness via Irrational Numbers}},
  booktitle	= {\STOC{1997}},
  year		= 1997,
  pages		= {200-209}
}

@Article{	  DSY09,
  author	= {Zeev Dvir and Amir Shpilka and Amir Yehudayoff},
  title		= {Hardness-Randomness Tradeoffs for Bounded Depth Arithmetic Circuits},
  journal	= {{SIAM} J. Comput.},
  volume	= 39,
  number	= 4,
  pages		= {1279--1293},
  year		= 2009,
  url		= {http://dx.doi.org/10.1137/080735850},
  doi		= {10.1137/080735850}
}

@InProceedings{	  FS13,
  author	= {Michael A. Forbes and Amir Shpilka},
  title		= {Quasipolynomial-Time Identity Testing of Non-commutative and Read-Once Oblivious Algebraic Branching Programs},
  booktitle	= {\FOCS{2013}},
  pages		= {243--252},
  year		= 2013,
  url		= {http://dx.doi.org/10.1109/FOCS.2013.34},
  doi		= {10.1109/FOCS.2013.34},
  note		= {\farXiv{1209.2408}},
  toupdate	= {journal}
}

@InProceedings{	  FSS14,
  author	= {Michael A. Forbes and Ramprasad Saptharishi and Amir Shpilka},
  title		= {Hitting sets for multilinear read-once algebraic branching programs, in any order},
  booktitle	= {\STOC{2014}},
  pages		= {867--875},
  year		= 2014,
  url		= {http://doi.acm.org/10.1145/2591796.2591816},
  doi		= {10.1145/2591796.2591816}
}

@Article{			 FSV18,
  author    = {Michael A. Forbes and Amir Shpilka and Ben Lee Volk},
  title     = {Succinct Hitting Sets and Barriers to Proving Lower Bounds for Algebraic Circuits},
  journal   = {Theory of Computing},
  volume    = {14},
  number    = {1},
  pages     = {1--45},
  year      = {2018},
  url       = {https://doi.org/10.4086/toc.2018.v014a018},
  doi       = {10.4086/toc.2018.v014a018},
  bibsource = {dblp computer science bibliography, https://dblp.org}
}

@Article{			GKSS17,
  author    = {Joshua A. Grochow and Mrinal Kumar and Michael E. Saks and Shubhangi Saraf},
  title     = {Towards an algebraic natural proofs barrier via polynomial identity testing},
  journal   = {CoRR},
  volume    = {abs/1701.01717},
  year      = {2017},
  url       = {http://arxiv.org/abs/1701.01717},
  archivePrefix = {arXiv},
  eprint    = {1701.01717},
  bibsource = {dblp computer science bibliography, https://dblp.org}
}

@InProceedings{	  GKST15,
  author	= {Rohit Gurjar and Arpita Korwar and Nitin Saxena and Thomas Thierauf},
  title		= {Deterministic Identity Testing for Sum of Read-once Oblivious Arithmetic Branching Programs},
  booktitle	= {\CCC{2015}},
  pages		= {323--346},
  year		= 2015,
  url		= {http://dx.doi.org/10.4230/LIPIcs.CCC.2015.323},
  doi		= {10.4230/LIPIcs.CCC.2015.323},
  note	        = {\arXiv{1411.7341}},
  toupdate      = {journal}
}

@InProceedings{	  HS80,
  author	= {Joos Heintz and Claus-Peter Schnorr},
  title		= {{Testing Polynomials which Are Easy to Compute (Extended
		  Abstract)}},
  booktitle	= {\STOC{1980}},
  year		= 1980,
  crossref      = {conf/stoc/1980},
  pages		= {262-272},
  doi		= {10.1145/800141.804674}
}

@Article{	  K10,
  author	= {Neeraj Kayal},
  title		= {{Algorithms for Arithmetic Circuits}},
  journal	= eccc,
  year		= 2010,
  ee		= {http://www.eccc.uni-trier.de/report/2010/073/}
}

@Article{	  KI04,
  author	= {Valentine Kabanets and Russell Impagliazzo},
  title		= {{D}erandomizing Polynomial Identity Tests Means Proving
		  Circuit Lower Bounds},
  journal	= {Computational Complexity},
  volume	= 13,
  number	= {1-2},
  year		= 2004,
  pages		= {1-46},
  doi		= {10.1007/s00037-004-0182-6},
  note		= {\pSTOC{2003}}
}

@inproceedings{LST21,
  author       = {Nutan Limaye and
                  Srikanth Srinivasan and
                  S{\'{e}}bastien Tavenas},
  title        = {Superpolynomial Lower Bounds Against Low-Depth Algebraic Circuits},
  booktitle    = {\FOCS{2021}},
  pages        = {804--814},
  publisher    = {{IEEE}},
  year         = {2021},
  url          = {https://doi.org/10.1109/FOCS52979.2021.00083},
  doi          = {10.1109/FOCS52979.2021.00083},
  note         = {\pECCC{21}{081}},
  bibsource    = {dblp computer science bibliography, https://dblp.org}
}

@InProceedings{	  LV98,
  author	= {Daniel Lewin and Salil P. Vadhan},
  title		= {{Checking Polynomial Identities over any Field: Towards a
		  Derandomization?}},
  booktitle	= {\STOC{1998}},
  year		= 1998,
  pages		= {438-447}
}

@InProceedings{	  S08b,
  author	= {Nitin Saxena},
  title		= {{Diagonal Circuit Identity Testing and Lower Bounds}},
  booktitle	= {\ICALP{2008}},
  year		= 2008,
  pages		= {60-71},
  doi		= {10.1007/978-3-540-70575-8_6},
  eccc		= {TR07/124},
  ecccurlid	= {2007/124}
}

\end{document}